\def\bw{\textbf w}
\def\mN{\mathcal N}
\def\sp{\hskip -5pt}
\def\b1{{1\!\!1}}
\def\bp{{\textbf p}}
\def\sH{{\mathsf H}}
\def\bC{{\mathbb C}}           
\def\bI{{\mathbb I}}
\def\bR{{\mathbb R}}
\def\bx{{\textbf x}}
\def\bp{{\textbf p}}
\def\beq{\begin{eqnarray}}
\def\eeq{\end{eqnarray}}
\def\by{\textbf{y}}
\newtheoremstyle{thm}
{12pt}
{12pt}
{\itshape}
{}
{\itshape\bfseries}
{}
{1em}
{}
\theoremstyle{thm}
\newtheorem{theorem}{Theorem}
\newtheorem{proposition}[theorem]{Proposition}
\newtheorem{definition}[theorem]{Definition}
\newcommand{\bra}[1]{{\left\langle{#1}\right\vert}}
\newcommand{\ket}[1]{{\left\vert{#1}\right\rangle}}
\newcommand{\qw}[1][-1]{\ar @{-} [0,#1]}
\newcommand{\qwx}[1][-1]{\ar @{-} [#1,0]}
\newcommand{\gate}[1]{*+<0.5em>{#1} \POS ="i","i"+UR;"i"+UL **\dir{-};"i"+DL **\dir{-};"i"+DR **\dir{-};"i"+UR **\dir{-},"i" \qw}
\newcommand{\meter}{*=<2.0em,1.5em>{\xy ="j","j"-<.778em,.322em>;{"j"+<.778em,-.322em> \ellipse ur,_{}},"j"-<0em,.4em>;p+<.5em,.9em> **\dir{-},"j"+<2.2em,2.2em>*{},"j"-<2.2em,2.2em>*{} \endxy} \POS ="i","i"+UR;"i"+UL **\dir{-};"i"+DL **\dir{-};"i"+DR **\dir{-};"i"+UR **\dir{-},"i" \qw}
\newcommand{\control}{*!<0em,.095em>-=-<.2em>{\bullet}}
\newcommand{\ctrl}[1]{\control \qwx[#1] \qw}
\newcommand{\targ}{*+<.04em,.04em>{\xy ="i","i"-<.50em,0em>;"i"+<.50em,0em> **\dir{-}, "i"-<0em,.47em>;"i"+<0em,.47em> **\dir{-},"i"*\xycircle<.6em>{} \endxy} \qw}
\newcommand{\qswap}{*=<0em>{\times} \qw}
\newcommand{\push}[1]{*{#1}}
\newcommand{\lstick}[1]{*!R!<.5em,0em>=<0em>{#1}}
\newcommand{\Qcircuit}{\xymatrix @*=<0em>}
\begin{document}

\begin{center}
{ \large \bfseries Scalable quantum neural networks by few quantum resources}

\sp

{\sc {Davide Pastorello$^{\diamond,\dagger}$} and  {Enrico Blanzieri}$^{\star,\dagger}$}

\bigskip

{
$^\diamond$ Department of Mathematics, Alma Mater Studiorum - University of Bologna
\\piazza di Porta San Donato 5, 40126 Bologna, Italy  \\ ~~ davide.pastorello3@unibo.it \\[10pt]
$^\star$ Department of Information Engineering and Computer Science, University of Trento, \\via Sommarive 9, 38123 Povo (Trento), Italy\\ ~~ enrico.blanzieri@unitn.it\\[10pt]
$^\dagger$ Trento Institute for Fundamental Physics and Applications, \\via Sommarive 14, 38123 Povo (Trento), Italy\\ ~~}

\end{center}

\vspace{0.5cm}

{\small \section*{Abstract} 
 
This paper focuses on the construction of a general parametric model that can be implemented executing multiple swap tests over few qubits and applying a suitable measurement protocol. The model turns out to be equivalent to a two-layer feedforward neural network which can be realized combining small quantum modules. The advantages and the perspectives of the proposed quantum method are discussed. 

}

\vspace{0.5cm}

\noindent
{\small \textbf{Keywords}: Neural networks; quantum machine learning; scalable quantum computing.}

\vspace{0.5cm}

\noindent
\section{Introduction}

The application of quantum computation to machine learning tasks offers some interesting solutions characterized by a quantum advantage with respect to the classical counterparts in terms of time and space complexity, expressive power, generalization capability, at least on a theoretical level \cite{Wittek, Schuld, Pastorello}. Moreover, quantum machine learning (QML) seems to be a promising path to explore in deep the possibilities offered by quantum machines and to exploit existing and near-term quantum computers for tackling real-world problems.
In this respect, the research activity aimed to the development of novel QML algorithms must face the issue of the lack of large-scaled, fault-tolerant, universal quantum computers. In order to promote QML as an operative technology we should provide solutions that are not too demanding in terms of quantum resources and, in the meanwhile, they supply quantum advantages w.r.t. classical information processing. The present paper focuses on a solution which represents a trade-off between the definition of a quantum model, specifically a quantum neural network, and the availability of few quantum resources as a realistic assumption.


In this work, we assume the availability of a very specific quantum hardware that is able to perform the swap test\footnote{The \emph{swap test} is a simple but useful technique to evaluate the similarity between two unknown pure states \cite{swap}. It is briefly illustrated in the next section.} over two $k$-qubit registers ($k\geq 1$). This assumption is motivated by several proposals of a physical implementation of the swap test then it turns out to be rather realistic for few qubits \cite{swap0, swap1}. Then, we assume to combine many copies of the quantum hardware, as modules, in order to obtain a larger structure for data processing. In particular, we argue that a combined execution of many independent swap tests, performing only local measurements, allows to create a two-layer feedforward neural network presenting some quantum advantages in terms of space complexity with the possibility to input quantum data. The main result is a general construction of a meaningful quantum model based on very few quantum resources whose scalability is allowed by the requirement of quantum coherence over a small number of qubits because a major role is played by measurement processes.    

The swap test is a well-known tool for developing QML techniques, so there are some proposals of quantum neural networks based on it. For instance, there are quantum neurons where the swap test is coupled to the quantum phase estimation algorithm for realizing quantum neural networks with quantum input and quantum output \cite{Zhao, Li}. Our approach is different, we assume only the capability of performing the swap test over few qubits for constructing neural networks where the input is a quantum state and the output is always classical. Our goal is the definition of a quantum model which can be realized assuming the availability of few quantum resources. 

The paper is structured as follows: in section \ref{background} we recall some basic notions about two-layer feedforward neural networks and the definition of swap test as the building block for the proposed modular architecture. In section \ref{sec:model}, we define the parametric model obtained combining small modules acting on few qubits, showing its equivalence to a neural network after a suitable measurement protocol is performed. In section \ref{sec:discussion}, we discuss the impact of the obtained results. In the final section, we draw some concluding remarks highlighting the perspectives of the proposed modular approach.

\section{Background}\label{background}

In this section we introduce some basic notions that are crucial for the definition of the model proposed and discussed in the paper. In particular, we define the structure of the neural network we consider in the main part of the work and the necessary quantum computing fundamentals.

\begin{definition}
Let $X$ and $Y$ be non-empty sets respectively called \textbf{input domain} and \textbf{output domain}. A (deterministic) \textbf{predictive model} is a function
$$y=f(x;\theta)\qquad x\in X,\,\, y \in Y,$$  
where $\theta=(\theta_1,...,\theta_d)$ is a set of real parameters.
\end{definition}

In supervised learning, given a training set $\{(x_1,y_1),\cdots,(x_m,y_m)\}$ of pairs input-output the general task is finding the parameters $\theta$ such that the model assigns the right output to any unlabelled input. The parameters are typically determined optimizing a \emph{loss function} like $\mathcal L(\theta)=\sum_{i=1}^m d(y_i,f(x_i,\theta))$ where $d$ is a metric defined over $Y$. A remarkable example of predictive model, where $X=\bR^n$ and $Y=\bR$, is the following \emph{two-layer neural network} with $h$ hidden nodes:
\beq
f(\bx; \bp, W)=\sum_{i=1}^h p_i \varphi(\bx\cdot \bw_i),
\eeq
where $\bw_i\in\bR^n$ is the $i$th row of the $n\times h$ weight matrix $W$ of the first layer, $p_i$ is the $i$th coefficient of the second layer, and $\varphi:\bR\rightarrow\bR$ is a continuous non-linear function. Despite its simplicity, this model is relevant because it is a \emph{universal approximator} if $\varphi$ is non-polynomial, in the sense that it can approximate, up to finite precision, any Borel function $K\rightarrow \bR$, with $K\subset\bR^n$ compact, provided a sufficient number of hidden nodes is available \cite{hornik, cybenko, pinkus}. 

Let us focus on a quadratic activation function for a two-layer network that is not very used in practice however it has been widely studied \cite{Livni, Du, Soltani,Tan} proving, in particular, that networks with quadratic activation are as expressive as networks with threshold activation and can be learned in polynomial time \cite{Livni}. The network with this activation fails the universality, however the quadratic activation presents the same convexity of the rectified linear unit (ReLU) but with a wider activating zone, so the loss curves of networks using these two activations converge very similarly \cite{Tan}.
In addition to the choice of a quadratic activation, let us consider the \emph{cosine similarity} instead of the dot product as pre-activation:
\beq
\cos(\bx,\bw):=\frac{\bx\cdot\bw}{\parallel \bx\parallel\,\,\parallel\bw\parallel}.
\eeq
The \emph{cosine pre-activation} is a normalization procedure adopted to decrease the variance of the neuron that can be large as the dot product is unbounded. A large variance can make the model sensitive to the change of input distribution, thus resulting in poor generalization. Experiments show that cosine pre-activation achieves better performances, in terms of test error in classification, compared to batch, weight, and layer normalization \cite{cosine}. The aim of the paper is showing that a simple quantum processing, combined in multiple copies, can be used to implement a varying-size two-layer network with cosine pre-activation and quadratic activation.

The main assumption of the work is that few quantum resources are available in the sense that we consider a quantum hardware characterized by a small number of qubits on which we can act with few quantum gates being far from the availability of a large-scaled, fault-tolerant, universal quantum computer. The central object is a well-known tool of QML, the \emph{swap test}, that we assume to have implemented into a specific-purpose quantum hardware, characterized by an arbitrarily small number of qubits, that we call \emph{module}. The questions we address are: \emph{What useful can we do if we are able to build such a module with a small number of qubits? Can we construct a neural network stacking such modules with some interesting properties?    }


As a postulate of quantum mechanics, any quantum system is described in a Hilbert space. We consider only the finite-dimensional case where a Hilbert space $\sH$ is nothing but a complex vector space equipped with an \emph{inner product} $\langle\,\,\,|\,\,\,\rangle:\sH\times\sH\rightarrow\bC$. A quantum system is described in the Hilbert space $\sH$ in the sense that its physical states are in bijective correspondence with the projective rays in $\sH$. Then a (pure) quantum state can be identified to a normalized vector $\ket\psi\in\sH$ up to a multiplicative phase factor. In other words, quantum states are described by one-dimensional orthogonal projectors of the form $\rho_\psi=\ket\psi\bra\psi$ where $\bra\psi$ is an element of the dual of $\sH$.

\begin{definition}\label{def:measurement}
Let $\sH$ be a (fnite-dimensional) Hilbert space and $I$ a finite set. A {\bf measurement process} is a collection of linear operators $\mathcal M=\{E_\alpha\}_{\alpha\in I}$ such that $E_\alpha\geq 0$ for each $\alpha\in I$ and $\sum_\alpha E_\alpha =\bI$, where $\bI$ is the identity operator on $\sH$.

Given an orthonormal basis $\{\ket i\}_{i=1,...,n}$ of $\sH$, the {\bf measurement in the basis} $\{\ket i\}_{i=1,...,n}$ is the measurement process defined by $\mathcal M=\{\ket i\bra i\}_{i=1,...,n}$.

\end{definition}

\noindent
The probability of obtaining the outcome $\alpha\in I$ by the measurement $\mathcal M=\{E_\alpha\}_{\alpha\in I}$ on the system in the state $\ket\psi$ is $\mathbb P_\psi(\alpha)=\langle\psi| E_\alpha\psi\rangle$. If $\mathcal M$ is a measurement in the basis $\{\ket i\}_{i=1,...,n}$ then the probability reduces to $\mathbb P_\psi (i)=|\langle\psi|i\rangle|^2 $.
A \emph{qubit} is any quantum system described in the Hilbert space $\bC^2$, denoting the standard basis of $\bC^2$ by $\{\ket 0, \ket 1\}$ we have the \emph{computational measurement} on a single qubit $\mathcal M=\{\ket0\bra0,\ket1\bra1\}$. 

Another postulate of quantum mechanics reads that, given two quantum systems $S_1$ and $S_2$ described in the Hilbert spaces $\sH_1$ and $\sH_2$ respectively, the Hilbert space of the composite system $S_1+S_2$ is given by the tensor product Hilbert space $\sH_1\otimes\sH_2$. Therefore there are states of the product form $\ket{\psi_1}\otimes \ket{\psi_2}\in\sH_1\otimes\sH_2$ (in contract notation: $\ket{\psi_1\psi_2}$) called \emph{separable states} and states that cannot be factorized called \emph{entangled states}. The latter describe a kind of non-local correlations without a counterpart in classical physics which are the key point of the most relevant advantages in quantum information processing.


Encoding classical data into quantum states is a crucial issue in quantum computing and specifically quantum machine learning. One of the most popular quantum encoding is the so-called \emph{amplitude encoding}. Assume we have a data instance represented by a complex vector $\bx\in\bC^n$ then it can be encoded into the amplitudes of a quantum state:
\beq\label{amplitude}
\ket\bx=\frac{1}{\parallel \bx\parallel}\sum_{i=1}^n x_i \ket i,
\eeq  
where $\{\ket i\}_{i=1,...,n}$ is a fixed orthonormal basis, called \emph{computational basis}, of the Hilbert space of the considered quantum physical system. If the quantum system is made up by $k$ qubits, say a \emph{$k$-qubit register}, with Hilbert space $\sH=(\bC^2)^{\otimes k}$ then for storing $\bx\in\bC^n$ we need $k=\lceil\log n\rceil$ qubits and the commonly used computational basis is $\{\ket d: d\in\{0,1\}^k\}$. Therefore, amplitude encoding exploits the exponential storing capacity of a quantum memory, but it does not allow the direct retrieval of the stored data. Indeed, the amplitudes cannot be observed, and only the probabilities $\lvert x_i\rvert^2/\parallel\! \bx\!\parallel^2$ can be estimated measuring the state $\ket\bx$ in the computational basis. An immediate consequence of the definition is that the inner product of two encoding states corresponds to the cosine similarity of the encoded vectors:
\beq
\langle\bx|\by\rangle=\cos(\bx,\by)\qquad \forall \bx,\by\in\bC^n.
\eeq

Within the quantum circuit model, quantum states are processed by quantum gates. Let us consider the \emph{Hadamard gate}, that is a unitary operator on $\bC^2$, defined in terms of the computational basis by:
\beq
H \ket a =\frac{\ket 0+(-1)^a\ket 1}{\sqrt 2}\qquad a\in\{0,1\}.
\eeq

\noindent
The \emph{controlled NOT} (CNOT) gate is a 2-qubit gate, that is a unitary operator on $\bC^2\otimes\bC^2$, defined by:
\beq
U_{CNOT}\, \ket{ab}:=\ket {a\,(a\oplus b)} \qquad a,b\in\{0,1\},
\eeq
where $\oplus$ is the sum modulo 2. Its circuital symbol is:
$$
\Qcircuit @C=1em @R=1.5em {
\ket a& &\ctrl{1} &\qw& \ket a \\
\ket b& &\targ &\qw & \hspace{.7cm}\ket {a\oplus b}
}
$$
The \emph{swap gate} is a 2-qubit gate which exchanges the input qubits, its circuital definition in terms of CNOT gates is:
$$
\Qcircuit @C=2em @R=2.8em {
& \qswap & \qw& \\
& \qswap\qwx & \qw &
} 
\Qcircuit @C=2em @R=1.5em {
&  & \\
& \lstick{:=}&\\
&  &
}
\Qcircuit @C=1em @R=1.5em {
& \targ&\ctrl{1} &\targ&\qw \\
& \ctrl{-1}&\targ &\ctrl{-1}&\qw &.
}
$$
The controlled swap gate is called \emph{Fredkin gate}, its action can be expressed in the following form in the computational basis of three qubits, where the first qubit is the control:
\beq
\mathsf F\ket{abc}= (1-a)\ket{abc}+a\ket{acb}\qquad a,b,c\in\{0,1\}.
\eeq

\noindent
The swap test is a very simple procedure to estimate the probability $|\langle\psi|\varphi\rangle|^2$, given two unknown pure states $\ket\psi$ and $\ket\varphi$ \cite{swap}. The procedure requires two applications of the Hadamard gate and a controlled swap operation between two multiple qubit registers generalizing the Fredkin gate. It is straightforward to show that the controlled swap of two $k$-qubit registers can be implemented with $k$ Fredkin gates controlled by the same qubit.
Consider two $k$-qubit registers initialized in $\ket\psi$ and $\ket\varphi$ and a control qubit prepared in $\ket 0$. Assume to act with the following circuit on the $2k+1$ qubits:
\beq\label{swaptest}
\Qcircuit @C=1.3em @R=1.3em {
\ket 0\qquad &\gate{H}& \ctrl{1} & \gate{H}& \meter\\
\ket\psi\qquad&\push{/^k }\qw& \qswap & \qw& \qw\\
\ket\varphi\qquad &\push{/^k }\qw& \qswap\qwx & \qw &\qw
} 
\eeq

\sp

\noindent
after a simple and well-known calculation we have that the probability of obtaining the outcome $0$, measuring the control qubit in the computational basis, is: 
\beq\label{P0}
\mathbb P(0)=\frac{1}{2}(1-|\langle\psi|\varphi\rangle|^2).
\eeq
Since the probability is estimated as a relative frequency in a Bernoulli trial, the number of runs of the circuit (\ref{swaptest}) required to obtain $\mathbb P(0)$ within an error $\epsilon$ is $\mathcal O(\epsilon^{-2})$. 

From the viewpoint of practical realization, several methods have been proposed to implement a linearly optical swap test \cite{Baldazzi, lin.opt.1,lin.opt.2} and techniques for implementing a swap test based on nonlinear optics have been proposed as well \cite{no.lin.opt.1,no.lin.opt.2}. In this work we realistically assume the availability of a swap test over few qubits, simply depicted by the following process diagram that we introduce for convenience of notation below:

\vspace{3cm}

\beq\label{ST}
\quad
\eeq

\vspace{-2.5cm}

\begin{center}

\begin{tikzpicture}
\node[draw](v0) at (-1.5,1.3) {$\ket\psi$};
\node[draw] (v2) at (-1.5,-1.3) {$\ket\varphi$};
\node[draw] (v3) at (0,0) {swap test};
\node[draw] (v1) at (2,0) {$\mathbb P(0)$};
  \Edge[Direct](v2)(v3)(v2)
 \Edge[Direct](v3)(v1);
 \Edge[Direct](v0)(v3)

\end{tikzpicture}

\end{center}

\noindent
where $\ket\psi$ and $\ket\varphi$ are $k$-qubit states and $\mathbb P(0)$ is given by  (\ref{P0}). In the following, we argue that independent swap tests between  $k$-qubit registers and a suitable measurement protocol over the control qubits can be used  for implementing a two-layer feedforward neural network. 

An alternative way to obtain the same estimate of the square of the inner product is worth to be considered. 
The swap test, given the quantum states $\ket\bx,\ket\bw\in(\bC^2)^{\otimes k}$ that encode the input and weight vectors $\bx,\bw$ into the amplitudes w.r.t. the computational basis, returns the probability $\frac{1}{2}(1-|\langle\bx|\bw\rangle|^2)=\frac{1}{2}(1-\cos^2(\bx,\bw))$, by (\ref{P0}), that is the output of a perceptron with a quadratic activation and cosine pre-activation. Alternatively, an analogous parametric model can be defined also considering other encodings, in general given by the action of parametric circuits $U_\bx$ and $V_\bw$ so that $\ket\bx=U_\bx\ket 0$ and $\ket\bw=V_{\bw}\ket 0$. For the amplitude encoding, the number of elementary gates required for the implementations of $U_\bx$ and $V_\bx$ is $\mathcal O(2^k)$. The output $|\langle\bx|\bw\rangle|^2$ corresponds to the probability of obtaining the bit string $0\in\{0,1\}^n$ by the computational measurement over the $k$-qubit state $\ket\Psi_{\bx,\bw}=U_\bw U_\bx\ket 0$. In fact: $\mathbb P_{\Psi_{\bx,\bw}}(0)=|\langle 0|\Psi_{\bx,\bw}\rangle|^2=|\langle 0|U^\dagger_\bw U_\bx\ket 0|^2 =|\langle\bw|\bx\rangle|^2$. Therefore, the parametric model can be simply implemented by the circuit $U=U_\bw U_\bx$ over $k$ qubits initialized in $\ket 0$
 instead of the $2k+1$ qubits required by the swap test. However, the overall number of the measurements required by the circuit $U=U_\bw U_\bx$ is $\mathcal O(k)$. The requirement of measuring only a single qubit of the swap test is particularly convenient in the case of multiple perceptrons (each implemented via a swap test) where their outputs are combined probabilistically as discussed in the next section. Since each module requires only a single-qubit measurement, the overall computation is simplified. This enables the construction of a weighted sum of the outputs without needing simultaneous measurements on all qubits in the system, making the process more efficient and scalable, especially in photonic architectures where both amplitude encoding and swap test can be efficiently implemented \cite{Baldazzi,lin.opt.1}.

\section{The proposed model}\label{sec:model}

Let us consider $m$ copies of the process diagram (\ref{ST}), that we call \emph{modules}, each module presents two $k$-qubit registers and one auxiliary qubit to perform the swap test. Then the total number of qubits, including the controls, is $m(2k+1)$. Let $\bx\in\bR^N$ be an input vector and $\bw\in\bR^N$ be a vector of parameters, if $k\geq \log N$ then a single module is sufficient to store $\bx$ and $\bw$ in its registers and the action of the swap test, which returns the probability $\frac{1}{2}(1-\cos^2(\bx,\bw))$ as the value of a non-linear function over $\cos(\bx,\bw)$, realizes an elementary perceptron with quadratic activation and cosine  pre-activation. However, we are not interested in implementing a single perceptron and according to the general assumption of few quantum resources available, we consider the case $k<\log N$, so in each module we can encode the input and weight vectors only partially.

\begin{proposition}\label{storing}
Let $\mathcal R$ be a $k$-qubit register. Let $N>2^k$, the number $m$ of non-interacting\footnote{In this context, \emph{non-interacting registers} means that any operation on the registers, including those required for state preparation, must be performed locally on each register. Therefore, no entanglement can be produced among different $k$-qubit registers.} copies of $\mathcal R$ needed for storing $\bx\in\bC^N$ into a state $\bigotimes_{i=1}^m \ket{\psi_i}$ by amplitude encoding is:
\beq
m=\mathcal O(N2^{-k}).
\eeq  

\end{proposition}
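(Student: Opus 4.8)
\noindent
The plan is to prove the bound constructively and then observe that it is tight, so that the word \emph{needed} is justified. First I would fix the encoding. Since $N>2^k$, set $m:=\lceil N2^{-k}\rceil$ and split $\bx\in\bC^N$ into $m$ consecutive blocks $\bx^{(1)},\dots,\bx^{(m)}$, each of length at most $2^k$, padding the last block with zeros. Because the Hilbert space $(\bC^2)^{\otimes k}$ of a single copy of $\mathcal R$ has dimension $2^k$, each block fits: amplitude-encode $\bx^{(i)}$ into the $i$th copy as $\ket{\psi_i}=\|\bx^{(i)}\|^{-1}\sum_j x^{(i)}_j\ket j$. This state preparation acts on $\mathcal R_i$ alone, so no entanglement among the registers is produced and the non-interacting constraint is respected. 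Keeping the $m$ classical norms $\|\bx^{(i)}\|$ alongside the product state $\bigotimes_{i=1}^m\ket{\psi_i}$ lets one recover $\bx$, so this many copies suffice; and since $\lceil N2^{-k}\rceil\le N2^{-k}+1<2N2^{-k}$ for $N>2^k$, this gives $m=\mathcal O(N2^{-k})$.

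Next I would argue optimality. Non-interacting registers force the joint state to be a product $\bigotimes_{i=1}^m\ket{\psi_i}$ with $\ket{\psi_i}\in(\bC^2)^{\otimes k}$, and by hypothesis each $\ket{\psi_i}$ is the amplitude encoding of some sub-vector of $\bx$ of length $n_i\le 2^k$. A normalized vector in a $2^k$-dimensional space carries at most $2(2^k-1)$ independent real parameters, and an amplitude-encoded state fixes its pre-image only up to a global complex scalar; hence the product state together with the $m$ recorded norms (and any classical phases) amounts to $\mathcal O(m2^k)$ real numbers, whereas a faithful description of a generic $\bx\in\bC^N$ requires $\Theta(N)$ of them. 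Therefore $m2^k=\Omega(N)$, i.e.\ $m=\Omega(N2^{-k})$, which matches the upper bound and confirms that $\lceil N2^{-k}\rceil$ copies are also necessary.

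The routine parts are the block count and the degrees-of-freedom bookkeeping. The one point that needs care — and the only real obstacle — is pinning down the precise meaning of \emph{storing} $\bx$: namely that $\bx$ must be reconstructible from the product state together with whatever classical side information (the per-block norms, possibly phases) one allows, and that the encoding inside each register genuinely is amplitude encoding, so that one cannot smuggle extra real coordinates into a single $k$-qubit register by switching to a different (e.g.\ angle-based) encoding. I expect stating this reconstructibility convention cleanly to be the delicate step for the $\Omega$ direction; the $\mathcal O$ bound, which is all the displayed formula asserts, follows immediately from the partition argument above.
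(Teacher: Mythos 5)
Your proof is correct and follows essentially the same route as the paper's: each $k$-qubit register holds at most $2^k$ amplitudes, so $m=\lceil N2^{-k}\rceil$ blocks suffice and are necessary, giving $m=\mathcal O(N2^{-k})$. The paper compresses this into the single inequality $(m-1)2^k<N<m2^k$; your added care about reconstructibility and the per-block norms is a reasonable elaboration but not a different argument.
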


\begin{proof}
In each register, we can encode at most $2^k$ entries of $\bx$ into the amplitudes of a $k$-qubit state. Therefore the minimum number $m$ of registers for storing $\bx$ into a pure state, where the registers are non-entangled with each other, satisfies $(m-1)2^k<N<m2^k$.

\end{proof}

Considering $m$ independent modules, we can encode at most $m2^k$ input entries and $m2^k$ weights into the available qubits (divided into the \emph{input register} and the \emph{weight register} of each module). Therefore, by proposition \ref{storing}, the total number of qubits required to encode and process $\bx\in\bR^N$ is: 
\beq
n = \mathcal O\left(N(2k+1)2^{-k}\right).
\eeq 
Since a single module does not have an input register large enough to encode all the entries of the input $\bx$ (i.e. $k<\log N$), we have to settle for a piecewise amplitude encoding parallelizing the modules, as a consequence the total number of qubits scales linearly in $N$, however there is an exponential decay in the size of a single module which can be relevant in specific cases. For instance, if $N=128$ and we have a single module with a $5$-qubit input register and a $5$-qubit weight register then 4 modules and 44 qubits are required to store an input vector and a weight vector. Note that in this context 44 qubits is not considered a \emph{large number} because the coherence must be maintained only inside a single module (that is 11 qubits, two 5-qubit registers and 1 control qubit). Thus, the parallelization of modules with few qubits cannot reach of course an exponential storing capacity but there are cases with realistic values of $k$ and $N$ that can be addressed efficiently by modularity.

In order to introduce the parametric model that we can implement executing swap tests in parallel, let us partition the input vector $\bx\in\bR^N$ and the weight vector $\bw\in\bR^N$ into pieces to feed the $m$ modules (each one with $2k+1$ qubits) assuming that $N=m2^k$ without loss of generality: 
\beq
\bx=(\bx^{(1)},...,\bx^{(m)}) \quad\mbox{where}\quad \bx^{(l)}=(x_{(l-1)2^k+1},...,x_{l2^k}),\qquad l=1,...,m
\eeq
$$
\bw=(\bw^{(1)},...,\bw^{(m)}) \quad\mbox{where}\quad \bw^{(l)}=(w_{(l-1)2^k+1},...,w_{l2^k}),\qquad l=1,...,m
$$

\noindent
$\bx^{(l)}$ can be stored into the $k$-qubit {input register} of the $l$th module and $\bw^{(l)}$ can be stored into the $k$-qubit {weight register} of the $l$th module, then the computation of the norms $\parallel\!\bx^{(l)}\!\parallel$ and $\parallel\!\bw^{(l)}\!\parallel$, for any $l=1,...,m$, are required for the encoding.
The parallelization of the modules to perform $m$ swap tests can be easily visualized as follows:

\vspace{3cm}

\beq\label{diagram}
\quad
\eeq

\vspace{-4.5cm}

\begin{center}

\begin{tikzpicture}
\node[draw](v0) at (-1.5,1.3) {$\ket{\bx^{(1)}}$};
\node[draw] (v2) at (-1.5,-1.3) {$\ket{\bw^{(1)}}$};
\node[draw] (v3) at (0,0) {swap test - 1};
\node[draw] (v1) at (2.5,0) {$\mathbb P_1(0)$};
  \Edge[Direct](v2)(v3)
 \Edge[Direct](v3)(v1);
 \Edge[Direct](v0)(v3)

\end{tikzpicture}

\vspace{0.3cm}

\begin{tikzpicture}
\node[draw](v0) at (-1.5,1.3) {$\ket{\bx^{(2)}}$};
\node[draw] (v2) at (-1.5,-1.3) {$\ket{\bw^{(2)}}$};
\node[draw] (v3) at (0,0) {swap test - 2  };
\node[draw] (v1) at (2.5,0) {$\mathbb P_2(0)$};
  \Edge[Direct](v2)(v3)
 \Edge[Direct](v3)(v1);
 \Edge[Direct](v0)(v3)

\end{tikzpicture}

\vspace{-1.4cm}

$$\vdots$$

\vspace{-1cm}

$$\vdots\qquad\qquad \vdots\qquad\qquad\,\,$$

\vspace{-1cm}

$$\vdots\qquad\qquad \vdots\qquad\qquad\,\,$$

\vspace{-1cm}

$$\vdots$$

\vspace{-0.5cm}

\begin{tikzpicture}
\node[draw](v0) at (-1.5,1.3) {$\ket{\bx^{(m)}}$};
\node[draw] (v2) at (-1.5,-1.3) {$\ket{\bw^{(m)}}$};
\node[draw] (v3) at (0,0) {swap test - $m$};
\node[draw] (v1) at (2.5,0) {$\mathbb P_m(0)$};
  \Edge[Direct](v2)(v3)
 \Edge[Direct](v3)(v1);
 \Edge[Direct](v0)(v3)

\end{tikzpicture}

\end{center}

\vspace{0.5cm}

\noindent
where $\mathbb P_l(0)$ is the probability of measuring 0 over the $l$th control qubit which can be estimated within an error $\epsilon$ with $\mathcal O(\epsilon^{-2})$ runs. The multiple swap tests depicted in (\ref{diagram}) can be seen as a layer of perceptrons with connectivity limited by the mutual independence of the modules. On this first elementary structure, we construct the second layer applying a specific  \emph{local measurement protocol} to the $m$ control qubits.

{By definition \ref{def:measurement}, a measurement process provides a classical probability distribution over the set of possible outcomes. In operational terms, a measurement process is repeated $\mathcal N$ times and a string of outcomes $x\in\bR^\mathcal N$ is stored, where the probability distribution is estimated in terms of relative frequencies. In the following definition, let us assume that only $\widetilde{\mathcal N}<\mathcal N$ outcomes are stored after $\mathcal N$ repetitions of a considered measurement process $\mathcal M$. 
\begin{definition}
 Let $\mathcal M$ be a measurement process. Assume to repeat the measurement $\mathcal N$ times storing $\widetilde{\mathcal N}<\mathcal N$ outcomes, then $\mathcal M$ is called {\bf lossy measurement} with {\bf efficiency} $p=\widetilde{\mathcal N}/\mathcal N$.
 \end{definition}
\noindent
The measurement protocol that we need for constructing the proposed model is characterized by local measurements over single qubits characterized by given efficiencies. 
\begin{definition}
Given a system composed by $m$ qubits, a {\bf local measurement protocol} is a collection $\{(\mathcal M_l,p_l)\}_{l=1,...,m}$ where $\mathcal M_l$ is a lossy measurement process on the $l$th qubit and $p_l$ its efficiency.
\\
If $\mathcal M_l=\{\ket0\bra0,\ket1\bra1\}$ for any $l=1,...,m$ then $\{(\mathcal M_l,p_l)\}_{l=1,...,m}$ is called {\bf computational measurement protocol}.
\end{definition}
In the case of the computational measurement protocol over $m$ qubits, the outcomes provided by a single shot of the protocol form a string $b=\{\emptyset,0,1\}^m$, where the symbol $\emptyset$ denotes a missing outcome. In the case the measurement processes $\mathcal M_l$ are lossless, the efficiency $p_l$ can be controlled selecting the number $\widetilde {\mathcal N}$ of times a measurement $\mathcal M_l$  is actually performed  when the overall protocol is repeated $\mathcal N$ times. From the physical viewpoint, one can also select the efficiency of the measurements controlling the losses of employed channels.}

Given the $m$ modules executing the multiple (independent) swap tests depicted in diagram (\ref{diagram}), let us define a model with parameters given by the weights $\bw\in\bR^N$ and the efficiencies $\textbf p=(p_1,...,p_m)$ such that the prediction associated to the input $\bx\in\bR^N$ is given by:
\beq\label{model}
f(\bx;\bw,\bp)=\frac{\mN_0}{\mN},
\eeq
where $\mN$ is the number of repetitions of the computational measurement protocol, with efficiencies $\bp=(p_1,...,p_m)$, on the control qubits and $\mathcal N_0$ is the total number of obtained zeros. 

\vspace{0.5cm}

\begin{proposition}\label{prop}
The model (\ref{model}) is equivalent, up to an error $\epsilon=\mathcal O(1/\sqrt \mN)$, to the following two-layer feedforward neural network with cosine pre-activation and activation $\varphi(z)=\frac{1}{2}(1-z^2)$ in the hidden neurons:

\vspace{3.0cm}

\beq\label{NN}
\qquad
\eeq

\vspace{-5cm}

\begin{center}

\begin{tikzpicture}
\node (w0) at (-1.5,1) {$x_1$};
\node[draw, circle](v0) at (-1,1) {};
\node (vv) at (-1,0) {$\vdots$};
\node (w2) at (-1.5,-1) {$x_{2^k}$};
\node[draw, circle] (v2) at (-1,-1){};
\node[draw, circle] (v3) at (0,0) {$$};

\node (w4) at (-1.7,-2) {$x_{2^k+1}$};
\node[draw, circle](v4) at (-1,-2) {};
\node (vvv) at (-1,-3) {$\vdots$};
\node (w5) at (-1.7,-4) {$x_{2^{(k+1)}}$};
\node[draw, circle] (v5) at (-1,-4){};
\node[draw, circle] (v6) at (0,-3) {$$};
  \Edge(v6)(v5)
 \Edge(v4)(v6)
  \Edge[label=$w_{2^k}$](v3)(v2)
 \Edge[label=$w_1$](v0)(v3) 

\node(V) at (-1,-5) {$\vdots$};

\node (w7) at (-2.2,-6) {$x_{(m-1)2^k+1}$};
\node[draw, circle](v7) at (-1,-6) {};
\node (vvvv) at (-1,-7) {$\vdots$};
\node (w8) at (-1.7,-8) {$x_{m\,2^k}$};
\node[draw, circle] (v8) at (-1,-8){};
\node[draw, circle] (v9) at (0,-7) {$$};
  \Edge[label=$w_{m\,2^k}$](v9)(v8)
 \Edge(v7)(v9)

\node(P) at (4.5,-3.9) {}; 

\node[draw, circle](VV) at (3,-4) {};
\Edge[label=$p_1$](VV)(v3)
\Edge[label=$p_2$](VV)(v6)
\Edge[label=$p_m$](VV)(v9)

\end{tikzpicture}

\end{center}

\end{proposition}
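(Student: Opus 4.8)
\emph{Proof idea.} The plan is to read the quantity $f(\bx;\bw,\bp)=\mN_0/\mN$ produced by the protocol as a frequentist estimate of its own expectation, to compute that expectation in closed form, and to recognise it as the function evaluated by the network $(\ref{NN})$; the announced error $\ep=\mathcal O(1/\sqrt\mN)$ will then be precisely the Monte Carlo fluctuation of a sum of independent bounded random variables.

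First I would pin down what a single module contributes. By $(\ref{P0})$ the $l$-th swap test returns the outcome $0$ on its control qubit with probability $\mathbb P_l(0)=\tfrac12\bigl(1-|\langle\bx^{(l)}|\bw^{(l)}\rangle|^2\bigr)$, and since the two $k$-qubit registers of the $l$-th module carry amplitude encodings we have $\langle\bx^{(l)}|\bw^{(l)}\rangle=\cos(\bx^{(l)},\bw^{(l)})$, so that $\mathbb P_l(0)=\varphi\bigl(\cos(\bx^{(l)},\bw^{(l)})\bigr)$ with $\varphi(z)=\tfrac12(1-z^2)$. This is exactly the value carried by the $l$-th hidden node of $(\ref{NN})$: its incoming edges bear the entries of the block $\bw^{(l)}$ and receive the entries of the block $\bx^{(l)}$, and the sparse (non fully connected) wiring drawn there is nothing but the mutual independence of the $m$ modules in diagram $(\ref{diagram})$, while the cosine pre-activation is the unavoidable by-product of amplitude encoding.

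Next I would compute $\mathbb E[f]$. When the computational measurement protocol with efficiencies $\bp$ is run $\mN$ times, the $l$-th control qubit is actually measured $\widetilde{\mN}_l=p_l\mN$ times, each measurement giving $0$ independently with probability $\mathbb P_l(0)$, and measurements belonging to distinct modules are independent because the modules are non-interacting; hence $\mN_0=\sum_{l=1}^m\mN_0^{(l)}$ where the $\mN_0^{(l)}\sim\mathrm{Bin}(\widetilde{\mN}_l,\mathbb P_l(0))$ are mutually independent. Linearity of expectation then gives $\mathbb E[f]=\mN^{-1}\sum_l\widetilde{\mN}_l\,\mathbb P_l(0)=\sum_{l=1}^m p_l\,\varphi\bigl(\cos(\bx^{(l)},\bw^{(l)})\bigr)$, which is literally the output of the network $(\ref{NN})$ with second-layer coefficients $p_1,\dots,p_m$.

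Finally I would control the fluctuation of $f$ about this mean. Since $\mN_0$ is a sum of $\mN\sum_l p_l$ independent $\{0,1\}$-valued terms, Hoeffding's inequality gives $\mathbb P\bigl(|f-\mathbb E f|\ge\ep\bigr)\le 2\exp\!\bigl(-2\ep^2\mN/\sum_l p_l\bigr)$, equivalently $\mathrm{Var}(f)=\mN^{-2}\sum_l\widetilde{\mN}_l\,\mathbb P_l(0)\bigl(1-\mathbb P_l(0)\bigr)\le(4\mN)^{-1}\sum_l p_l=\mathcal O(1/\mN)$; in either reading $|f-\mathbb E f|=\mathcal O(1/\sqrt\mN)$, which is the asserted equivalence. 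I do not expect a genuine obstacle here: the work is bookkeeping, and the only points needing care are that the identification of the chunks $\bx^{(l)},\bw^{(l)}$ with the edge pattern of $(\ref{NN})$ is faithful, and that it is legitimate to treat the efficiency $p_l$ interchangeably as a deterministic fraction of the $\mN$ runs or as an i.i.d.\ keep/discard rule applied to each run, since by independence the identity of the runs that record qubit $l$ is irrelevant and this choice affects only constants, not the $\mathcal O(1/\sqrt\mN)$ rate.
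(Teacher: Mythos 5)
Your proposal is correct and follows essentially the same route as the paper's proof: identify the per-module outcome-$0$ probability $p_l\,\varphi(\cos(\bx^{(l)},\bw^{(l)}))$ after scaling by the measurement efficiency, sum over the $m$ modules to recover the network output, and observe that the aggregate relative frequency $\mN_0/\mN$ estimates this sum with the standard $\mathcal O(1/\sqrt\mN)$ sampling error. Your explicit expectation computation and Hoeffding/variance bound merely make rigorous the concentration step that the paper states without detail, so there is no substantive difference in approach.
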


\begin{proof}

By the application of the computational measurement protocol with efficiencies $\{p_l\}_{l=1,...,m}$ to the scheme (\ref{diagram}), the overall probability of obtaining the outcome $0$ on the $l$th control qubit is:
\beq
\widehat{\mathbb P}_l(0)=p_l\,\mathbb P_l(0)=\frac{p_l}{2}\left(1-\cos^2(\bx^{(l)},\bw^{(l)})\right).
\eeq
The probability can be estimated, up to an error $\epsilon$, by the relative frequency $\mathcal N_0^{(l)}/\mN$ where $\mathcal N_0^{(l)}$ is the number of zeros obtained on the $l$th qubit by $\mN=\mathcal O(\epsilon^{-2})$ repetitions of the computational measurement protocol. By construction, the output of the neural network (\ref{NN}) corresponds to the value $\sum_l \widehat{\mathbb P}_l(0)$ that can be estimated by:
$$\sum_{l=1}^m\frac{\mathcal N_0^{(l)}}{\mN}=\frac{\mathcal N_0}{\mN}.$$
Therefore, for any input $\bx$ and any choice of the parameters $\bw$ and $\bp$ the output $\mathcal N_0/\mN$ of the model (\ref{model}) corresponds to the output of the neural network (\ref{NN}) up to an error $\epsilon=\mathcal O(1/\sqrt \mN)$.

\end{proof}

Proposition \ref{prop} states that we can construct feedforward neural networks by combination of swap tests as modules with fixed size, i.e. defined over qubit registers with given number of qubits. Let us note that the number of hidden and output neurons can be arbitrarily increased. In fact, with additional runs of the swap tests, by permutation of the input entries and the re-initialization of the weights $\bw$ one can realize the connections between the input neurons and an arbitrary number of hidden neurons under the topological constraint given by the size of the modules. Collecting the outcomes obtained by the computational measurement protocol with varying efficiencies, we can define the connections from the hidden neurons and an arbitrary number of output nodes. In this respect, the following proposition enables the implementation of a complete feedforward neural network by a suitable measurement protocol.

\begin{proposition}\label{prop1}

Given a single module of size $k$ and the positive integers $R$ and $Q$, let be $\bw_1,...,\bw_R\in\bR^{2^k}$, and $p_{rq}\in[0,1]$ for any $r=1,...,R$ and $q=1,...,Q$. Let us consider the following algorithm:

\begin{algorithm}[H]
\footnotesize

\vspace{0.5cm}

\KwIn{Input vector $\bx\in\bR^{2^k}$} 
\KwResult{Output vector $\by\in\bR^{Q}$} 
\ForEach{$r\gets 1,...,R$}{
\ForEach{$q\gets 1,...,Q$}{run the swap test $\mN$ times on $\ket\bx$ and $\ket{\bw_{r}}$ and measure the control qubit with efficiency $p_{rq}$\; collect the outcomes $b_{rq}\in\{\emptyset,0,1\}^{\mN}$\;
}}
\ForEach{$q\gets 1,...,Q$}
{construct the concatenation $b_q=b_{1q}b_{2q}\cdots b_{Rq}\in\{\emptyset,0,1\}^{R\times \mN}$\;
{$y_q\gets\frac{\mathcal N_{0,q}}{R\mN}$ \qquad[$\mathcal N_{0,q}$ is the number of zeros in the string $b_q$]}
}
\Return $\by=(y_1,...,y_Q)$

\vspace{0.5cm}

\label{QK-medians}
\end{algorithm}

\noindent
The output $\by\in\bR^Q$ corresponds, up to an error $\mathcal O(1/\sqrt {R\mN})$, to the output of a fully connected feedforward neural network, with cosine pre-activation and activation $\varphi(z)=\frac{1}{2}(1-z^2)$, with $2^k$ input nodes, $R$ hidden nodes, and $Q$ output nodes.

\end{proposition}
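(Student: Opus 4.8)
The plan is to exhibit the target network explicitly and then reduce the statement to a routine binomial concentration estimate; this is the natural generalisation of Proposition \ref{prop}, now with a single module reused $R$ times (giving the $R$ hidden neurons) and $Q$ distinct efficiencies per reuse (giving the $Q$ output neurons). Take the two-layer network in which the $r$th hidden neuron has incoming weight vector $\bw_r\in\bR^{2^k}$, cosine pre-activation $\cos(\bx,\bw_r)$ and activation $\varphi(z)=\tfrac12(1-z^2)$, so that its value on input $\bx$ is $\varphi(\cos(\bx,\bw_r))$, and in which the second-layer weight from hidden neuron $r$ to output neuron $q$ is $p_{rq}/R$ (an arbitrary $R\times Q$ array of weights, hence fully connected). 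Then the $q$th component of the network output is $\mu_q:=\tfrac1R\sum_{r=1}^R p_{rq}\,\varphi(\cos(\bx,\bw_r))$, and the proposition amounts to showing $|y_q-\mu_q|=\mathcal O(1/\sqrt{R\mN})$ for every $q=1,\dots,Q$.

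First I would analyse a single $(r,q)$ block. Since $\ket\bx$ and $\ket{\bw_r}$ are amplitude encodings, $\langle\bx|\bw_r\rangle=\cos(\bx,\bw_r)$, so by \nref{P0} one run of the swap test on $\ket\bx$ and $\ket{\bw_r}$ produces the control outcome $0$ with probability $\mathbb P(0)=\varphi(\cos(\bx,\bw_r))$, independently across runs. Measuring the control with efficiency $p_{rq}$ keeps $\widetilde{\mN}_{rq}=p_{rq}\mN$ of the $\mN$ outcomes (up to the rounding $|\widetilde{\mN}_{rq}-p_{rq}\mN|\le 1$), each of which is still $0$ with probability $\varphi(\cos(\bx,\bw_r))$; hence the number $\mathcal N_0^{(rq)}$ of zeros in $b_{rq}$ is $\mathrm{Binomial}\!\left(\widetilde{\mN}_{rq},\varphi(\cos(\bx,\bw_r))\right)$, with mean $p_{rq}\mN\,\varphi(\cos(\bx,\bw_r))+\mathcal O(1)$ and variance at most $\widetilde{\mN}_{rq}/4$.

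Next I would pass to the concatenation. Because the swap-test runs for distinct $r$ and distinct $q$ are mutually independent, the string $b_q=b_{1q}\cdots b_{Rq}$ records $\sum_{r=1}^R\widetilde{\mN}_{rq}\le R\mN$ independent $\{0,1\}$ indicators, those in the $r$th block having success probability $\varphi(\cos(\bx,\bw_r))$, and $\mathcal N_{0,q}=\sum_{r=1}^R\mathcal N_0^{(rq)}$ is their sum. Thus $\mathbb E[\mathcal N_{0,q}]=\mN\sum_{r=1}^R p_{rq}\varphi(\cos(\bx,\bw_r))+\mathcal O(R)$ and $\mathrm{Var}(\mathcal N_{0,q})\le\tfrac14\sum_{r}\widetilde{\mN}_{rq}\le R\mN/4$. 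Dividing by $R\mN$ gives $\mathbb E[y_q]=\mu_q+\mathcal O(1/\mN)$ and $\mathrm{Var}(y_q)\le 1/(4R\mN)$, so by Chebyshev's inequality (or, with exponential tails, Hoeffding's bound) $|y_q-\mu_q|=\mathcal O(1/\sqrt{R\mN})$; as $Q$ is fixed, a union bound makes this hold for the whole vector $\by$, which is exactly the assertion.

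The computation is elementary; the one point needing care is the normalisation used in the algorithm — counting zeros over the concatenated string of length $R\mN$ (the $\emptyset$ symbols included) rather than over the retained outcomes only — since this is precisely what turns the efficiency $p_{rq}$ into the second-layer weight $p_{rq}/R$ and at the same time upgrades the accuracy from $\mathcal O(1/\sqrt{\mN})$ to $\mathcal O(1/\sqrt{R\mN})$. The only minor nuisance, that $p_{rq}\mN$ need not be an integer, produces an $\mathcal O(1/\mN)$ discrepancy that is dominated by the stochastic error and can be absorbed silently.
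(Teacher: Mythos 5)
Your proof is correct and rests on the same core idea as the paper's: each $(r,q)$ block of swap-test runs realizes one hidden neuron, and concatenating the outcome strings $b_{1q}b_{2q}\cdots b_{Rq}$ realizes the weighted combination feeding the $q$th output node. The difference is in execution. The paper proves the claim only for $k=R=Q=2$ by drawing the two partial networks explicitly and then declaring the generalization straightforward, whereas you treat general $R$ and $Q$ directly and replace the appeal to Proposition \ref{prop} by an explicit binomial mean/variance computation with a Chebyshev (or Hoeffding) tail bound; that is what actually delivers the improved rate $\mathcal O(1/\sqrt{R\mN})$ rather than $\mathcal O(1/\sqrt{\mN})$, a point the paper asserts but does not derive. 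More importantly, you are more careful than the paper about the normalization: the paper states that the sum of the relative frequencies of $0$ in $b_{1q}$ and $b_{2q}$ ``corresponds to'' the relative frequency of $0$ in their concatenation, which is off by a factor of $R$; since the algorithm divides by $R\mN$, the quantity $y_q$ in fact estimates $\frac{1}{R}\sum_{r}p_{rq}\,\varphi(\cos(\bx,\bw_r))$, so the second-layer weights of the realized network are $p_{rq}/R$, exactly as you say. Because the proposition does not pin down the second-layer weights, this rescaling does not affect the truth of the statement, but your explicit identification of it repairs an internal inconsistency in the paper's own argument. Your remark on rounding $p_{rq}\mN$ to an integer is likewise a legitimate, if minor, point the paper ignores.
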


\begin{proof}
Let us prove the claim simply for $k=R=Q=2$, since the generalization is straightforward.  Consider a module with $k=2$ and the execution of the following steps ($Q=2$): \\
\\
1. run the swap test $\mN$ times with weights $\bw_1\in\bR^{4}$ and measure the control qubit with efficiency $p_{11}$. Collect the outcomes $b_{11}\in\{\emptyset,0,1\}^{\mN}$;\\
2. run the swap test $\mN$ times, with the same weights and measure the control qubit with efficiency $p_{12}$. Collect the outcomes $b_{12}\in\{\emptyset,0,1\}^{\mN}$ ;
\\
\\
By proposition \ref{prop}, computing the relative frequency of 0 over $b_{11}$ and $b_{12}$ respectively we obtain the estimations of the probabilities $\frac{p_{11}}{2}(1-\cos^2(\bw_1, \bx))$ and $\frac{p_{12}}{2}(1-\cos^2(\bw_2,\bx))$ implementing the following network up to an error $\mathcal O(1/\sqrt \mN)$:

\begin{center}

\begin{tikzpicture}
\node[draw, circle](i1) at (-1.5,1) {};
\node[draw,circle] (i2) at (-1.5,0) {};
\node[draw, circle] (i3) at (-1.5,-1){};
\node[draw, circle] (i4) at (-1.5,-2) {$$};
\node[draw,circle] (h1) at (0,0){};
\node[draw,circle] (o1) at (1.5,0){};
\node[draw,circle] (o2) at (1.5,-1){};
\Edge[label=$w_{11}$](i1)(h1)
\Edge[label=$w_{12}$](i2)(h1)
\Edge[label=$w_{13}$](i3)(h1)
\Edge[label=$w_{14}$](i4)(h1)
\Edge[label=$p_{11}$](h1)(o1)
\Edge[label=$p_{12}$](h1)(o2)

\end{tikzpicture}
\end{center}

\noindent
Repeating the steps 1 and 2 with weights $\bw_2\in\bR^4$ and efficiencies $p_{21}$ and $p_{22}$, we can collect also the outcomes $b_{21}, b_{22}\in\{\emptyset,0,1\}^{\mN}$. Computing the relative frequency of 0 in the strings $b_{21}$ and  $b_{22}$, we can implement another network as well: 

\begin{center}

\begin{tikzpicture}
\node[draw, circle](i1) at (-1.5,1) {};
\node[draw,circle] (i2) at (-1.5,0) {};
\node[draw, circle] (i3) at (-1.5,-1){};
\node[draw, circle] (i4) at (-1.5,-2) {$$};
\node[draw,circle] (h2) at (0,-1){};
\node[draw,circle] (o1) at (1.5,0){};
\node[draw,circle] (o2) at (1.5,-1){};
\Edge[label=$w_{21}$](i1)(h2)
\Edge[label=$w_{22}$](i2)(h2)
\Edge[label=$w_{23}$](i3)(h2)
\Edge[label=$w_{24}$](i4)(h2)
\Edge[label=$p_{21}$](h2)(o1)
\Edge[label=$p_{22}$](h2)(o2)

\end{tikzpicture}
\end{center}

\noindent
In order to obtain a complete network, we can combine the two networks above ($R=2$), providing the outputs:
$$y_1= \frac{p_{11}}{2}(1-\cos^2(\bw_1,\bx))+\frac{p_{21}}{2}(1-\cos^2(\bw_2,\bx)) ,$$
$$y_2= \frac{p_{12}}{2}(1-\cos^2(\bw_1, \bx))+\frac{p_{22}}{2}(1-\cos^2(\bw_2, \bx)) .$$
\\
For estimating the sum $y_1$ of the two probabilities we take the sum of the relative frequencies of the outcome 0 in the string $b_{11}$ and in the string $b_{21}$ which obviously corresponds to the relative frequency of the outcome zero in the concatenation $b_{11}b_{21}$. In the same way, we estimate $y_2$ as the relative frequency of 0 in the string $b_{12}b_{22}$. In other words, given an input $\bx\in\bR^4$ and the parameters $\bw_1, \bw_2, p_{11}, p_{12}, p_{21}, p_{22}$, we have a model that outputs:
$$ y_1=\frac{\mathcal N_{0,1}}{2\mN}\qquad\mbox{and}\qquad y_2=\frac{\mathcal N_{0,2}}{2\mN},$$
where $\mathcal N_{0,i}$ is the number of zeros in the concatenation $b_i=b_{1i}b_{2i}$. Therefore, $y_1$ and $y_2$ correspond, up to an error $\mathcal O(1/\sqrt {2\mN})$, to the outputs of the fully connected network:
\begin{center}

\begin{tikzpicture}
\node[draw, circle](i1) at (-1,1) {};
\node[draw,circle] (i2) at (-1,0) {};
\node[draw, circle] (i3) at (-1,-1){};
\node[draw, circle] (i4) at (-1,-2) {$$};
\node[draw,circle] (h1) at (0,0){};
\node[draw,circle] (h2) at (0,-1){};
\node[draw,circle] (o1) at (1,0){};
\node[draw,circle] (o2) at (1,-1){};
\Edge (i1)(h1)
\Edge (i1)(h2)
\Edge (i2)(h1)
\Edge (i2)(h2)
\Edge (i3)(h1)
\Edge (i3)(h2)
\Edge (i4)(h1)
\Edge (i4)(h2)
\Edge (h1)(o1)
\Edge (h1)(o2)
\Edge (h2)(o1)
\Edge (h2)(o2)

\end{tikzpicture}
\end{center}

\noindent
where the 8 connections in the first layer are weighted by the components of $\bw_1$ and $\bw_2$ and the 4 connections in the second layer are weighted by $p_{11}, p_{12}, p_{21}, p_{22}$.

By direct generalization of the construction above we have that the number of steps $Q$ establishes the number of output neurons and the number of repetitions $R$ corresponds to the number of hidden neurons.

\end{proof}

Let us remark that the complete topology of the network, provided by proposition \ref{prop1}, in the first layer is implied by the fact that we consider a single module for the calculation of the swap test. Otherwise, the connectivity in the first layer depends by the number of the considered modules and their size $k$. On the contrary, the local measurement protocol described in the algorithm  of proposition \ref{prop1} always enables all-to-all connections in the second layer.

\section{Discussion}\label{sec:discussion}

The statements of propositions \ref{prop} and \ref{prop1} provide a general procedure of constructing measurement protocols for the implementation of feedforward neural networks using the outcomes obtained running multiple swap tests between quantum states encoding the input vectors and the weights vector. In particular, proposition \ref{prop} implies that if we are able to perform independent swap test over $k$-qubit registers then a suitable sample is sufficient for obtaining a neural structure for processing data encoded into the amplitudes of input states. Moreover, proposition \ref{prop1} establishes that, by suitable repetitions of the swap test, one can decide the number of hidden and output neurons. { In particular, if the number of features in the considered dataset is sufficiently low, complete neural networks can be implemented as a direct consequence of propositions \ref{prop} and \ref{prop1}.}

{The modular structure of the model naturally enables the possibility of ensemble learning. In fact, an input vector can be trivially partitioned as described in section \ref{sec:model} but also random partitions can be actuated, in this case a random sampling is performed over the training set as done in bootstrap aggregating for instance. Moreover, different $k$-sized modules initialized with the same weights $\bw\in\bR^{2^k}$ can be interpreted as the application of a convolutional layer followed by a local pooling as done in convolutional neural networks.    }

From the viewpoint of time complexity, the performances of the proposed model are bounded by the state preparation for encoding the input $\bx\in\bR^N$ and the weights $\bw\in\bR^N$ into quantum states since the time taken by the measurement protocols is independent from the input size and only related to accuracy and hyperparameters. The general issue of efficient state preparation is a common bottleneck of the current quantum machine learning algorithms, in particular in quantum neural networks. Indeed, the encoding of a data vector $\bx\in\bR^N$ into the amplitudes of a $\log N$-qubit state $\ket\bx$ can be performed in time $\mathcal O(\log N)$ only if a quantum RAM (QRAM) is available otherwise the retrieval of $\ket\bx$ and the initialization of $\ket\bw$ take time $\mathcal O(N)$. 
 From the viewpoint of the space complexity, there is not the expected exponential storage capacity because we assume the availability of a collection of small quantum register where the input can be encoded piecewise. However, the number of required qubits is tamed by an exponential decay in the size $k$ of a single module then, for suitable values of $N$ and $k$, data can be represented efficiently. Moreover, even if the number of qubits is high, we do not require the coherence over all the qubits but only in any module where the swap test is performed.  
In addition to the advantage in the storing capacity (bounded by $k$) the quantum implementation of the proposed scheme offers the possibility of submitting superposition of data to the network. Moreover, even if the modules are \emph{a priori} uncorrelated, an input can be represented by quantum data with entanglement distributed on a high number of qubits, in this case the quantum processing of a widely entangled input creates quantum correlations among the modules. In general, the input of the model can be a quantum state encoding a feature vector as discussed but also non-artificial quantum data entangling the modules giving rise to a non-trivial execution of the neural network.

Let us remark that the global structure of neural network is also preserved when modules are of minimum size ($k=1$), in this case we have the lowest storing capacity and the first layer turns out to be sparse. Nevertheless, the model is still quantum then capable to process quantum data as a two-layer feedforward network. 

\section{Conclusions}

In this paper we have presented a scheme for implementing a two-layer feedforward neural network using repeated swap tests over quantum registers and suitable measurement protocols. The main goal of the work is showing how to create a non-trivial quantum machine learning model when few quantum resources are available.   In particular, we assume the availability of a specific-purpose quantum hardware, that we call \emph{module}, capable of implementing the swap test (this assumption is motivated by several recent proposals of photonic platforms for this task). By combination of an arbitrary amount of these modules, a scalable neural network can be constructed where the first layer is implemented by the quantum processing itself and the second one is obtained in terms of a measurement protocol characterized by varying efficiencies in registering the outcomes. Even if the composition of independent modules does not create entanglement, an entangled state in input introduces quantum correlations in the network.

The empirical evaluation of the proposed model is not addressed in the present paper where only a theoretical construction is provided. From the experimental viewpoint, in the case of amplitude encoding of classical data a simulation of the multiple swap tests is not so useful because it mathematically corresponds to the execution of a two-layer network with quadratic activation that is a well-known object already investigated in deep. Then, experiments on real quantum hardware would be definitely more significant for revealing actual quantum advantages of the discussed proposal. On the contrary, in the case of quantum data where the input can be represented by highly entangled quantum states, a simulation could be interesting but restricted to a small total number of qubits for computational reasons. 
Indeed, in our opinion, the crucial experimental point will be the combination of real photonic hardware, as small modules implementing the swap test over few qubits, in order to quantify the performances of the scalable quantum neural networks constructed by modularity. In this perspective, a realization with photonic platforms can lead to an on-chip embedding of the modules increasing the technological impact of the proposed QML architecture.

\section*{Acknowledgements}

This work was partially supported by project SERICS (PE00000014) under the MUR National Recovery and Resilience Plan funded by the European Union - NextGenerationEU. The authors are grateful to Sebastian Nagies and Emiliano Tolotti whose ongoing work on the implementation provided useful insights.


\end{document}